\DeclareMathOperator{\real}{Re}
\DeclareMathOperator{\imag}{Im}
\newtheorem{thm}{Theorem}
\newtheorem{lem}[thm]{Lemma}
\title{Complex and Quaternionic Principal Component Pursuit and Its Application to Audio Separation}
\author{Tak-Shing~T.~Chan,~\IEEEmembership{Member,~IEEE}  and Yi-Hsuan~Yang,~\IEEEmembership{Member,~IEEE}%
\thanks{Manuscript received October 30, 2015; revised January 4, 2016; accepted Month xx, xxxx. Date of publication Month xx, xxxx; date of current version January 4, 2016. This work was supported by the Academia Career Development Program. The associate editor coordinating the review of this manuscript and approving it for publication was Prof.~Eric Moreau.}%
\thanks{The authors are with the Research Center for Information Technology Innovation, Academia Sinica, Taipei 11564, Taiwan (e-mail: takshingchan@citi.sinica.edu.tw; yang@citi.sinica.edu.tw).}%
\thanks{Digital Object Identifier 10.1109/LSP.2016.2514845}}
\begin{document}

\maketitle

\begin{abstract}
Recently, the principal component pursuit has received increasing attention in signal processing research ranging from source separation to video surveillance. So far, all existing formulations are real-valued and lack the concept of phase, which is inherent in inputs such as complex spectrograms or color images. Thus, in this letter, we extend principal component pursuit to the complex and quaternionic cases to account for the missing phase information. Specifically, we present both complex and quaternionic proximity operators for the $\ell_1$- and trace-norm regularizers. These operators can be used in conjunction with proximal minimization methods such as the inexact augmented Lagrange multiplier algorithm. The new algorithms are then applied to the singing voice separation problem, which aims to separate the singing voice from the instrumental accompaniment. Results on the iKala and MSD100 datasets confirmed the usefulness of phase information in principal component pursuit.
\end{abstract}

\begin{IEEEkeywords}
Quaternions, principal component, pursuit algorithms, source separation.
\end{IEEEkeywords}

\section{Introduction}
\label{sec:intro}

\IEEEPARstart{I}n recent years, robust principal component analysis (RPCA) \cite{Candes11} has been quite successful in various signal processing applications including source separation, face recognition, and video surveillance \cite{Huang12,Ikemiya15,Peng10,Bouwmans14}. RPCA works by decomposing an input matrix $X\in\mathbb{R}^{m \times n}$ into a low-rank matrix $A$ plus a sparse matrix $E$:
\begin{equation}
\label{eq:1}
\min_{A,E}\mathrm{rank}(A)+\lambda\|E\|_0\mbox{\quad s.t.\quad}X = A+E.
\end{equation}
Unfortunately, the above formulation is NP-hard. Hence, the principal component pursuit (PCP) \cite{Candes11} instead solves the following relaxed problem:
\begin{equation}
\label{eq:2}
\min_{A,E}\|A\|_*+\lambda\|E\|_1\mbox{\quad s.t.\quad}X = A+E\,,
\end{equation}
where $\|\cdot\|_*$ is the trace norm (sum of singular values), $\|\cdot\|_1$ is the entrywise $\ell_1$-norm, $\lambda$ is a positive parameter which is set to $k/\sqrt{\max(m,n)}$, and $k$ denotes the trade-off between the rank of $A$ and the sparsity of $E$ \cite{Candes11,Huang12}. Under weak conditions and $k=1$, it has been proven that PCP has a high probability to exactly recover the low-rank and sparse components \cite{Candes11}, although $k$ can be adjusted if the conditions are violated. Most implementations of PCP are based on proximal minimization \cite{Combettes11} which is an extension of gradient projection in the nondifferentiable case. The proximity operator of a function $f:\mathbb{R}^p\rightarrow\mathbb{R}^p$ is defined as \cite{Combettes11}
\begin{equation}
\label{eq:21}
\mathrm{prox}_f\mathbf{z}=\arg\min_\mathbf{x}\left(\frac{1}{2}\|\mathbf{z}-\mathbf{x}\|_2^2+f(\mathbf{x})\right),\ \mathbf{x}\in\mathbb{R}^p,
\end{equation}
with closed-form solutions such as the soft-thresholding \cite{Donoho95} and singular value thresholding \cite{Cai10} operators for the $\ell_1$- and trace-norm regularizers, respectively. The resulting PCP algorithm in \cite{Candes11} is based on the well-known inexact augmented Lagrange multiplier algorithm (IALM), which has good convergence guarantees \cite{Lin09}. Their algorithm looks exactly like Algorithm~\ref{alg:1} below, except that the input  matrix $X$ is real.

\IEEEpubidadjcol

\subsection{Related Work}
\label{subsec:related}

The objective of the singing voice separation (SVS) problem is to separate the singing voice component from an audio mixture containing both the singing voice and the instrumental accompaniment. First proposed in \cite{Huang12}, PCP-SVS \cite{Huang12,Yang12,Chan15,Ikemiya15} assumes that the magnitude spectrogram of pop music can be decomposed via \eqref{eq:2} into a low-rank instrumental component $A$ and a sparse voice component $E$. This assumption is based on the premise that the instrumental accompaniment is usually repetitive (hence low-rank), while the vocalist can only sing one note at a time (hence sparse). Then, the separated components are reconstructed using overlap-add with the original phases in the mixture (see Fig.~\ref{fig:pcp-svs}). As PCP-SVS decomposes entire spectrograms instead of individual frames, it is able to exploit statistical redundancies at both the local and global time scales. This approach assumes that the magnitude spectrograms are additive; however, prior to the invention of PCP-SVS, King and Atlas \cite{King10} has already demonstrated that magnitude additivity does not hold when the phases differ. Furthermore, research in parametric spatial audio \cite{Traa14} suggests that inter-channel (stereo) phase might also be important. Motivated by these observations, we aim to extend PCP to the complex and quaternion domains. More specifically, by solving for relevant proximity operators in these domains, the extended PCP will be able to preserve not only the spectral phase but inter-channel phase as well. We hypothesize that the preserved phase will improve the performance of signal processing applications such as SVS.

Although there have been some work on quaternion PCA \cite{Pei03,LeBihan03}, a quaternion version of RPCA has not been established. An implementation of the quaternion singular value decomposition (SVD), based on real bidiagonalization using quaternion Householder transformations, is available in the Quaternion Toolbox for Matlab (QTFM) \cite{10.1016/j.amc.2006.04.032}. However, as this \textsc{Matlab} implementation is inefficient, we will use an older but faster algorithm \cite{Zhang97,Pei03,LeBihan04} throughout the paper.

Our contributions in this paper are twofold. First, we will extend PCP to the complex and quaternion domains (which are phase-preserving) with some quaternion algebra. Second, we will test their performances on two audio source separation competition datasets, ascertaining their usefulness.

This paper is organized as follows. In Section \ref{sec:prelim}, we recall without proof some basic facts about quaternion matrices. In Section \ref{sec:pcp}, we present the complex and quaternionic PCP. Then, we describe our experiments using real, complex and quaternionic PCP on the iKala and Mixing Secrets datasets in Section \ref{sec:exp} and conclude in Section \ref{sec:conc}.

\begin{figure}[!t]
\centering
\subfloat[]{
\includegraphics[width=\columnwidth]{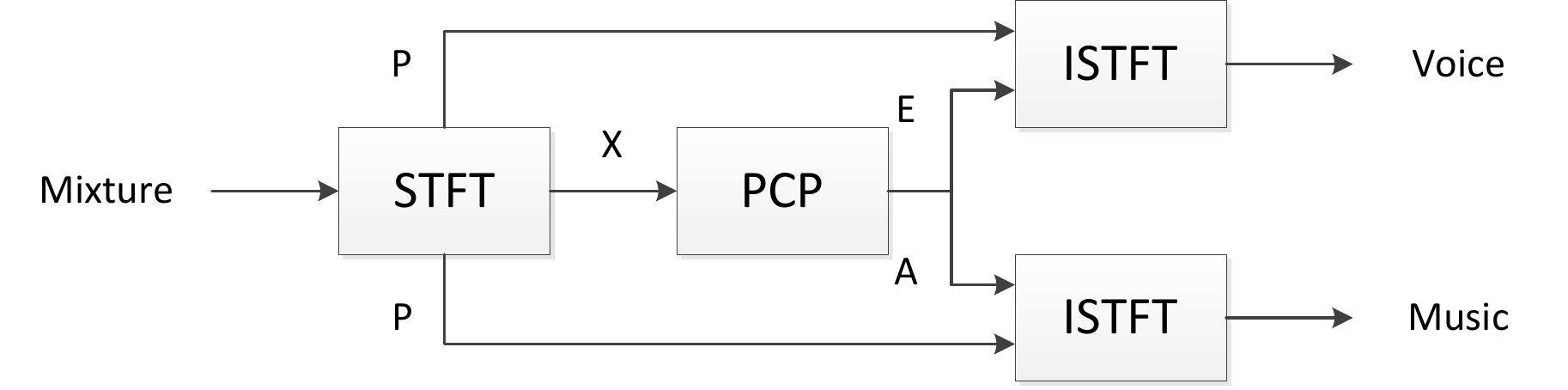}
}\\
\subfloat[]{
\includegraphics[width=\columnwidth]{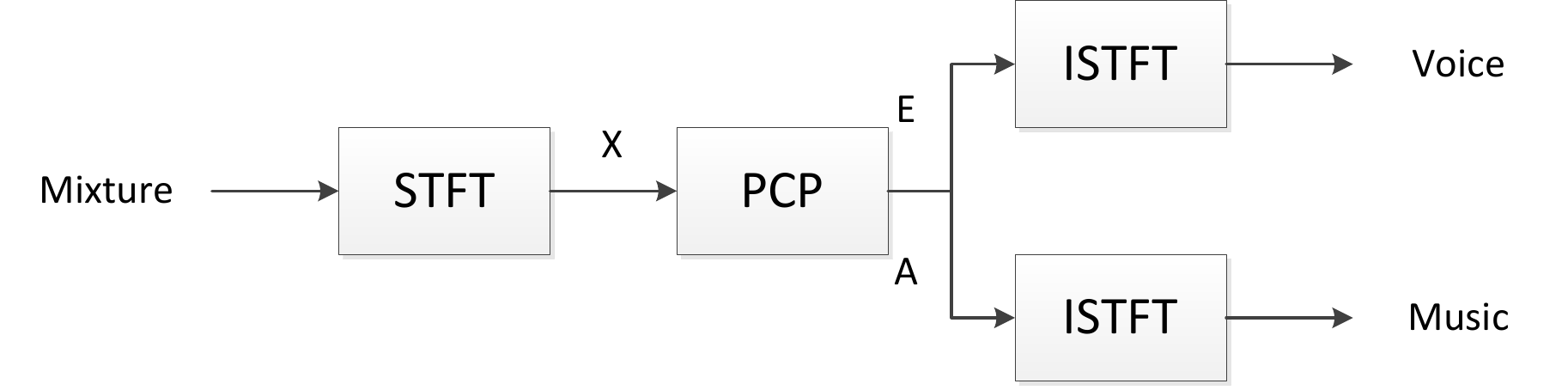}
}
\caption{Block diagram of PCP-SVS systems. Refer to \eqref{eq:2} for the meaning of $X$, $E$, and $A$. (a) In real PCP, $X$ contains the magnitude only; the phase $P$ is lost and has to be copied from the original mixture for ISTFT. (b) In complex and quaternionic PCP, the phases are preserved. For the quaternionic case only, the STFT and ISTFT blocks multiplex and demultiplex the stereo spectrograms to and from a quaternionic spectrogram (see Section \ref{sec:exp}).}
\label{fig:pcp-svs}
\end{figure}

\section{Preliminaries}
\label{sec:prelim}

The quaternions $\mathbb{H}$ is a superset of the complex numbers $\mathbb{C}$ with four dimensions instead of two, i.e., $q=a_0+a_1\imath+a_2\jmath+a_3\kappa$, where $a_0,a_1,a_2,a_3\in\mathbb R$, with imaginary units $\imath,\jmath,\kappa$ such that $\imath^2=\jmath^2=\kappa^2=\imath\jmath\kappa=-1$ \cite{Zhang97}. Here $\real(q)=a_0$ is called the real part and $\imag(q)=a_1\imath+a_2\jmath+a_3\kappa$ is called the imaginary part. If $\real(q)=0$, $q$ is called a pure quaternion. The quaternion conjugate and magnitude are defined as $\bar q=a_0-a_1\imath-a_2\jmath-a_3\kappa$ and $|q|=\sqrt{a_0^2+a_1^2+a_2^2+a_3^2}$, respectively. A quaternion can be uniquely represented as a pair of complex numbers \cite{Zhang97}: if $x=a_0+a_1\imath$ and $y=a_2 +a_3\imath$, then $q=x+y\jmath=a_0+a_1\imath+a_2\jmath+a_3\kappa$ and vice versa. Thus the complex numbers are indeed a subset of the quaternions.

\paragraph{Complex Matrix Isomorphism}

For any quaternionic matrix $A\in\mathbb{H}^{m\times n}$, there is a well-known complex isomorphism $\chi:\mathbb{H}^{m\times n}\rightarrow\mathbb{C}^{2m\times 2n}$, defined by \cite{Zhang97}:
\begin{equation}
\label{eq:9}
\chi(A)=\chi({X+Y\jmath})=\left[\begin{array}{cc}
X&Y\\
-\overline{Y}&\overline{X}
\end{array}\right],
\end{equation}
where $X,Y\in\mathbb{C}^{m\times n}$ is the unique representation of $A$ such that $A=X+Y\jmath$. This isomorphism has the properties that $\chi(AB)=\chi(A)\chi(B)$, $\chi(A^*)=\chi(A)^*$, and $\mathrm{tr}(\chi(A))=2\real\mathrm{tr}(A)$ for all $A,B\in\mathbb{H}^{m\times n}$ \cite{Knapp02}. The truncated SVD of $A$ can also be performed on $\chi(A)$ directly, where the singular values are the same as those of $A$, except that they occur in pairs \cite{Pei03}. This isomorphism allows us to simplify our proofs by working in an isomorphic complex domain.

\paragraph{Real Vector Isomorphism}

For $A,B\in\mathbb{H}^{m\times n}$, it has long been known that $\real\mathrm{tr}(AB^*)$ is isomorphic to the Euclidean inner product on $\mathbb{R}^{4mn}$ \cite{Tai68}. In particular, we can first transform $A$ into a real matrix in $\mathbb{R}^{m\times 4n}$ by \cite{Kumar12}
\begin{equation}
\label{eq:15}
[\real(A),\imag_\imath(A),\imag_\jmath(A),\imag_\kappa(A)],
\end{equation}
then further vectorize the results into a real vector in $\mathbb{R}^{4mn}$ (likewise for $B$). According to \cite{Tai68}, their dot product in $\mathbb{R}^{4mn}$ is equivalent to $\real\mathrm{tr}(AB^*)$ given the original quaternionic matrices. So it makes sense to define the quaternionic inner product as $\langle A,B\rangle=\real\mathrm{tr}(AB^*)=\real\mathrm{tr}(A^*B)$ \cite{Knapp02}, which is nonstandard but obeys all the real inner product axioms due to the aforementioned equivalence. Furthermore, its induced quaternionic Frobenius norm $\|A\|_F=\sqrt{\langle A,A\rangle}$ satisfies \cite{LeBihan04}:
\begin{equation}
\label{eq:19}
\|A\|_F=\sqrt{\sum_i\sigma_i^2(A)}=\sqrt{\frac{1}{2}\sum_i\sigma_i^2(\chi(A))},
\end{equation}
where $\sigma_i(\cdot)$ denotes the singular values in any order.

\section{Complex and Quaternionic PCP}
\label{sec:pcp}

In this section, we will extend the real PCP to the complex and quaternionic cases. As the complex numbers are a subset of the quaternions, we only need to prove the quaternion case.

\subsection{Derivation of Proximity Operators}
\label{subsec:prox}

We begin by extending the proximal operator itself.

\begin{thm}
\label{thm:3.0}
The proxmity operator \eqref{eq:21} can be extended to the quaternion and complex cases via:
\begin{equation}
\label{eq:21b}
\mathrm{prox}_f\mathbf{z}=\arg\min_\mathbf{x}\left(\frac{1}{2}\|\mathbf{z}-\mathbf{x}\|_2^2+f(\mathbf{x})\right),\ \mathbf{x}\in F^p,
\end{equation}
where $F$ is $\mathbb{H}$ or $\mathbb{C}$.
\end{thm}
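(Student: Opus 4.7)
The plan is to reduce the quaternionic case to the real case via the real vector isomorphism of Section~\ref{sec:prelim}, since the complex case is then subsumed (complex numbers being a subset of the quaternions). The core observation is that \eqref{eq:21b} is defined purely through a squared norm plus a real-valued function, so if we can transport the whole objective to $\mathbb{R}^{4p}$ isometrically, the extended proximity operator will inherit the usual existence/uniqueness guarantees of its real counterpart.

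First, I would invoke the map in \eqref{eq:15} applied to $p\times 1$ matrices to produce an isomorphism $\varphi:\mathbb{H}^p\rightarrow\mathbb{R}^{4p}$. Using the identity $\langle A,B\rangle=\real\mathrm{tr}(AB^*)$ together with its equivalence to the Euclidean dot product in $\mathbb{R}^{4mn}$ (the Tai68 result cited in the preliminaries), I would check that $\varphi$ preserves inner products, hence norms. Consequently, for any $\mathbf{z},\mathbf{x}\in\mathbb{H}^p$,
\begin{equation*}
\tfrac{1}{2}\|\mathbf{z}-\mathbf{x}\|_2^2=\tfrac{1}{2}\|\varphi(\mathbf{z})-\varphi(\mathbf{x})\|_2^2.
\end{equation*}
Defining $\tilde f:=f\circ\varphi^{-1}:\mathbb{R}^{4p}\rightarrow\mathbb{R}$, the problem in \eqref{eq:21b} is therefore equivalent, under $\varphi$, to the classical real proximity problem \eqref{eq:21} for $\tilde f$ at the point $\varphi(\mathbf{z})$.

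From this equivalence, all the standard facts transfer: if $f$ is proper, lower semicontinuous, and convex on $\mathbb{H}^p$ (meaning $\tilde f$ has the same properties on $\mathbb{R}^{4p}$), then the argmin exists and is unique, and $\mathrm{prox}_f\mathbf{z}=\varphi^{-1}(\mathrm{prox}_{\tilde f}\varphi(\mathbf{z}))$ is a well-defined element of $\mathbb{H}^p$. The complex case follows identically via the restriction $\varphi\!\!\upharpoonright_{\mathbb{C}^p}:\mathbb{C}^p\rightarrow\mathbb{R}^{2p}$.

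The only subtlety I anticipate is bookkeeping around the factor of two that appeared in the Frobenius-norm identity \eqref{eq:19} for $\chi$; this is not an issue here because we are using the \emph{real} vector isomorphism $\varphi$ (componentwise flattening into $\mathbb{R}^{4p}$), which is a genuine isometry, not the complex matrix isomorphism $\chi$. Apart from this, the proof is essentially a transport-of-structure argument and involves no nontrivial quaternion algebra beyond what is recalled in Section~\ref{sec:prelim}.
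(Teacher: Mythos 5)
Your proposal is correct and follows essentially the same route as the paper's own proof: both transport the objective to $\mathbb{R}^{4p}$ via the real vector isomorphism obtained by vectorizing \eqref{eq:15}, observe that the quaternion magnitude makes $\frac{1}{2}\|\mathbf{z}-\mathbf{x}\|_2^2$ invariant under this map, and thereby reduce \eqref{eq:21b} to the real problem \eqref{eq:21} with $f$ replaced by its pullback. Your version is somewhat more explicit than the paper's (naming $\varphi$, stating $\mathrm{prox}_f\mathbf{z}=\varphi^{-1}(\mathrm{prox}_{\tilde f}\varphi(\mathbf{z}))$, and spelling out the convexity hypotheses for existence and uniqueness), but the underlying argument is the same.
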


\begin{proof}
One approach is to transform the quaternionic vectors into real vectors, then invoke \eqref{eq:21} after compensating for any possible differences inside $f(\mathbf{x})$. We can use the real isomorphism from the vectorization of \eqref{eq:15} for this. Due to the definition of the quaternion magnitude, $\frac{1}{2}\|\mathbf{z}-\mathbf{x}\|_2^2$ is invariant under this transformation, so we can (and will) equivalently extend the domain of $f$ to $\mathbb{H}^p$ without needing to adjust $f(\mathbf{x})$ in what follows. This completes the proof.
\end{proof}

We now treat the $\ell_1$- and trace-norm regularizers in turn.

\begin{thm}
\label{thm:3.2}
The proximity operator for the quaternionic and complex $\ell_1$ regularizers $\lambda\|X\|_1$, where the $\ell_1$-norm operates entrywise, is:
\begin{equation}
\label{eq:25}
\mathrm{prox}_{\lambda\|\cdot\|_1}^F\mathbf{z}=\left(1-\frac{\lambda}{|\mathbf{z}|}\right)_+\mathbf{z},\ \mathbf{z}\in F^p,
\end{equation}
where $F$ is $\mathbb{H}$ or $\mathbb{C}$ and $\mathbf{z}=\mathrm{vec}\ Z$.
\end{thm}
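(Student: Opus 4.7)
The plan is to reduce this to a well-known real calculation by exploiting (a) the separability of both the squared distance and the entrywise $\ell_1$-norm, and (b) the real vector isomorphism described in the preliminaries. First I would observe that because $\|\mathbf{z}-\mathbf{x}\|_2^2=\sum_i |z_i-x_i|^2$ and $\|\cdot\|_1$ acts entrywise, the minimization decouples across coordinates, so it suffices to solve the scalar problem
\begin{equation*}
\mathrm{prox}_{\lambda|\cdot|}^F z=\arg\min_{x\in F}\Bigl(\tfrac{1}{2}|z-x|^2+\lambda|x|\Bigr)
\end{equation*}
for each entry and then reassemble.

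Next, I would invoke the real isomorphism \eqref{eq:15}: writing $x=a_0+a_1\imath+a_2\jmath+a_3\kappa$ and identifying it with $\mathbf{a}=(a_0,a_1,a_2,a_3)\in\mathbb{R}^4$ (and similarly $z\mapsto\mathbf{b}$), the definition of the quaternion magnitude gives $|z-x|^2=\|\mathbf{b}-\mathbf{a}\|_2^2$ and $|x|=\|\mathbf{a}\|_2$. The scalar quaternionic problem therefore becomes exactly the real block (vector) soft-thresholding problem
\begin{equation*}
\min_{\mathbf{a}\in\mathbb{R}^4}\Bigl(\tfrac{1}{2}\|\mathbf{b}-\mathbf{a}\|_2^2+\lambda\|\mathbf{a}\|_2\Bigr),
\end{equation*}
whose solution is the standard group-soft-thresholding formula $\hat{\mathbf{a}}=(1-\lambda/\|\mathbf{b}\|_2)_+\,\mathbf{b}$. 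This can be verified either via the subgradient optimality condition $\mathbf{0}\in\hat{\mathbf{a}}-\mathbf{b}+\lambda\,\partial\|\hat{\mathbf{a}}\|_2$ (splitting into the cases $\hat{\mathbf{a}}\neq\mathbf{0}$ and $\hat{\mathbf{a}}=\mathbf{0}$), or by noting that any minimizer must be collinear with $\mathbf{b}$ and then optimizing a scalar along that ray.

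Finally I would translate back through the isomorphism: since the map $\mathbf{a}\leftrightarrow x$ is linear and preserves magnitudes, $\hat{\mathbf{a}}=(1-\lambda/|z|)_+\mathbf{b}$ corresponds to $\hat{x}=(1-\lambda/|z|)_+ z$. Reassembling the entries yields \eqref{eq:25}. The complex case $F=\mathbb{C}$ is literally a restriction to the subalgebra where $a_2=a_3=0$, so the same argument (now working in $\mathbb{R}^2$) applies verbatim. The only step requiring any care is the subgradient verification at $\mathbf{b}=\mathbf{0}$ or when $|z|\le\lambda$, where one must check that $\mathbf{0}$ is indeed optimal because $\mathbf{b}\in\lambda\,\partial\|\mathbf{0}\|_2=\{\mathbf{v}:\|\mathbf{v}\|_2\le\lambda\}$; this is the only mildly subtle point, and it falls out immediately from the definition of the subdifferential of the Euclidean norm.
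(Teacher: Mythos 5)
Your proposal is correct and follows essentially the same route as the paper: both reduce the quaternionic $\ell_1$ prox, via the real vector isomorphism \eqref{eq:15}, to a real group (block) soft-thresholding problem with one group of size four per quaternion entry, whose solution is $(1-\lambda/\|\mathbf{b}\|_2)_+\mathbf{b}$. The only difference is that the paper obtains this by citing the known equivalence of the quaternionic lasso with the group lasso and reading off the operator from the group-lasso literature, whereas you verify the block soft-thresholding formula directly from the subgradient optimality conditions, which makes your argument more self-contained.
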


\begin{proof}
It is a known result \cite{Kumar12} that the quaternionic lasso
\begin{equation}
\label{eq:23}
\min_\mathbf{a}\frac{1}{2}\|\mathbf{x}-D\mathbf{a}\|_2^2+\lambda\|\mathbf{a}\|_1,\ \mathbf{x}\in\mathbb{H}^m,\mathbf{a}\in\mathbb{H}^n,
\end{equation}
with $D\in\mathbb{R}^{m\times n}$, is equivalent to the group lasso \cite{Yuan06}
\begin{equation}
\label{eq:24}
\min_A\frac{1}{2}\|X-DA\|_F^2+\lambda\|A\|_{1,2},\ X\in\mathbb{R}^{m\times 4},\ A\in\mathbb{R}^{n\times 4}
\end{equation}
via the transformation in \eqref{eq:15}.
By setting $D=I$ in (\ref{eq:23}--\ref{eq:24}) and assigning each quaternion vector element to its own group, we get from \cite{Yuan06} the required proximity operator for the quaternionic $\ell_1$-norm regularizer.
\end{proof}

Perhaps not surprisingly, \eqref{eq:25} looks exactly the same as the soft-thresholding operator \cite{Donoho95} which is the corresponding operator in the real case. Note that the complex and quaternionic soft-thresholding operators already exist \cite{Sardy00,Jin12}, but they are not solved in the proximal form above. More recently, the proximity operator for the complex $\ell_1$-norm has been solved \cite{Maleki13}, but the quaternionic case remains open until now. Next we will deal with the trace-norm regularizer by generalizing both the von Neumann trace inequality \cite{Horn13} and a proof in \cite{Tomioka12} to the quaternionic case.

\begin{lem}
\label{lem:3}
For any two compatible quaternionic matrices, the von Neumann trace inequality also holds:
\begin{equation}
\label{eq:26}
\real\mathrm{tr}(A^*B)\leq\sum_i \sigma_i(A)\sigma_i(B),
\end{equation}
where the singular values $\sigma_i(\cdot)$ are in a nonincreasing order.
\end{lem}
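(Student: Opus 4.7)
The plan is to lift the inequality to the complex isomorphic image and apply the well-known complex von Neumann trace inequality there. The key ingredients are already recorded in the Preliminaries: multiplicativity and adjoint-preservation of $\chi$, the identity $\mathrm{tr}(\chi(C)) = 2\real\mathrm{tr}(C)$, and the fact that the singular values of $\chi(A)$ are exactly those of $A$, each occurring with multiplicity two.

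Concretely, I would proceed as follows. First, set $C = A^*B$ and combine the isomorphism properties to write
\begin{equation*}
2\real\mathrm{tr}(A^*B) \;=\; \mathrm{tr}(\chi(A^*B)) \;=\; \mathrm{tr}(\chi(A)^*\chi(B)).
\end{equation*}
Because the left-hand side is real, the right-hand side equals its own real part. Second, apply the classical complex von Neumann trace inequality to the complex matrices $\chi(A), \chi(B) \in \mathbb{C}^{2m\times 2n}$ to obtain
\begin{equation*}
\real\mathrm{tr}(\chi(A)^*\chi(B)) \;\leq\; \sum_{j=1}^{2\min(m,n)} \sigma_j(\chi(A))\,\sigma_j(\chi(B)),
\end{equation*}
with the singular values on the right arranged in nonincreasing order.

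Third, I would unfold the paired-singular-value structure. Sorting the duplicated sequences $(\sigma_i(A),\sigma_i(A))_i$ and $(\sigma_i(B),\sigma_i(B))_i$ in nonincreasing order yields exactly $\sigma_1(A), \sigma_1(A), \sigma_2(A), \sigma_2(A),\dots$ and likewise for $B$; pairing them termwise gives
\begin{equation*}
\sum_{j} \sigma_j(\chi(A))\,\sigma_j(\chi(B)) \;=\; 2\sum_i \sigma_i(A)\,\sigma_i(B).
\end{equation*}
Chaining the three displays and dividing by $2$ yields \eqref{eq:26}.

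The only subtle step is the third one: one must be careful that the nonincreasing re-ordering of the doubled sequences really does pair $\sigma_i(A)$ with $\sigma_i(B)$ (rather than mixing them), and that the complex von Neumann inequality matches the singular values in the order it requires. This is the main obstacle, but it reduces to a direct sorting argument once one notes that each quaternionic singular value contributes a block of two identical values on both sides. Everything else is a mechanical transport along the isomorphism $\chi$.
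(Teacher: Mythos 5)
Your proposal is correct and follows essentially the same route as the paper: transport the trace through the complex isomorphism $\chi$ using $\mathrm{tr}(\chi(A^*B))=2\real\mathrm{tr}(A^*B)$ and $\chi(A^*B)=\chi(A)^*\chi(B)$, invoke the classical complex von Neumann inequality, and use the fact that each quaternionic singular value appears twice in $\chi$ to recover the factor of $2$. Your explicit handling of the real part and the sorting of the doubled singular-value sequences is slightly more careful than the paper's, but the argument is the same.
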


\begin{proof}
By the properties of the complex matrix isomorphism \eqref{eq:9}, we have the following:
\begin{align}
\begin{split}
\real\mathrm{tr}(A^*B)=&\frac{1}{2}\mathrm{tr}(\chi(A^*B))\\
=&\frac{1}{2}\mathrm{tr}(\chi(A)^*\chi(B))\\
\leq&\frac{1}{2}\sum_i \sigma_i(\chi(A))\sigma_i(\chi(B)).
\end{split}
\end{align}
The last line is due to the original von Neumann trace inequality \cite{Horn13}. Since $\chi(A)$ outputs each singular value twice, we have $\real\mathrm{tr}(A^*B)\leq\sum_i \sigma_i(A)\sigma_i(B)$ in the quaternionic case too. To our best knowledge, this result is new.
\end{proof}

\begin{thm}
\label{thm:3.6}
The proximity operator for the trace-norm regularizer $\lambda\sum_i\sigma_i(X)$ is:
\begin{equation}
\label{eq:27}
\mathrm{prox}_{\lambda\|\cdot\|_*}\mathbf{z}=\mathrm{vec}\ U(\Sigma-\lambda)_+V^*,\ \mathbf{z}\in F^p,
\end{equation}
where $\mathbf{z}=\mathrm{vec}\ Z$, $U\Sigma V^*$ is the SVD of $Z$ with singular values $\Sigma_{ii}=\sigma_i(Z)$ in a nonincreasing order, and $F$ is $\mathbb{H}$ or $\mathbb{C}$.
\end{thm}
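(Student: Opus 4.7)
The plan is to adapt the standard Cai--Cand\`es--Shen proof of singular value thresholding to $\mathbb{H}$, using the quaternionic von Neumann trace inequality of Lemma~\ref{lem:3} as the key ingredient. Since Theorem~\ref{thm:3.0} lets us work directly in $\mathbb{H}^{m\times n}$ and the complex case follows from $\mathbb{C}\subset\mathbb{H}$, it suffices to treat the quaternionic case.

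First I would expand the squared distance using the quaternionic inner product $\langle A,B\rangle=\real\mathrm{tr}(A^*B)$ from Section~\ref{sec:prelim}, giving
\begin{equation*}
\tfrac{1}{2}\|Z-X\|_F^2+\lambda\|X\|_* = \tfrac{1}{2}\|Z\|_F^2 - \real\mathrm{tr}(X^*Z) + \tfrac{1}{2}\|X\|_F^2 + \lambda\|X\|_*.
\end{equation*}
The only term that couples $X$ to $Z$ is $\real\mathrm{tr}(X^*Z)$, which Lemma~\ref{lem:3} bounds above by $\sum_i \sigma_i(X)\sigma_i(Z)$. Combined with $\|X\|_F^2=\sum_i \sigma_i(X)^2$ from \eqref{eq:19} and $\|X\|_*=\sum_i \sigma_i(X)$, this yields a lower bound that depends on $X$ only through its singular values and separates into one-dimensional subproblems
\begin{equation*}
\min_{s\ge 0}\ \tfrac{1}{2}s^2 - s\,\sigma_i(Z) + \lambda s,
\end{equation*}
each minimized at $s^\star = (\sigma_i(Z)-\lambda)_+$.

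Finally, I would verify that this lower bound is attained by the candidate $X=U(\Sigma-\lambda)_+V^*$ from the theorem statement. A direct computation using the unitarity of $U$ and $V$ gives $X^*Z=V(\Sigma-\lambda)_+\Sigma V^*$, so $\real\mathrm{tr}(X^*Z)=\sum_i(\sigma_i(Z)-\lambda)_+\sigma_i(Z)$, which equals $\sum_i\sigma_i(X)\sigma_i(Z)$; hence Lemma~\ref{lem:3} is tight for this $X$ while its singular values simultaneously solve the scalar subproblems, forcing $X$ to be a minimizer. The step I expect to be trickiest is precisely this attainment argument---the quaternionic analogue of the equality case in von Neumann's inequality---but it is sidestepped by checking equality by hand on the explicit candidate rather than proving a general characterization.
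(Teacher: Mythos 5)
Your proposal follows essentially the same route as the paper: expand $\|Z-X\|_F^2$ via the quaternionic inner product, invoke the quaternionic von Neumann inequality (Lemma~\ref{lem:3}) to decouple the bound into scalar soft-thresholding problems in the singular values, and read off $(\sigma_i(Z)-\lambda)_+$. Your explicit verification that the candidate $U(\Sigma-\lambda)_+V^*$ attains equality in Lemma~\ref{lem:3} (using cyclicity of $\real\mathrm{tr}$, which survives in $\mathbb{H}$ even though $\mathrm{tr}$ itself is not cyclic) is a step the paper leaves implicit, and is a worthwhile addition rather than a different approach.
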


\begin{proof}
The real case has been proven in \cite{Tomioka12}. This proof is virtually identical to the real case except that we are additionally endowed with Lemma \ref{lem:3}, which allows us to extend the results to the complex and quaternionic cases. By \eqref{eq:19}, and the Euclidean inner product identity $\langle z-x,z-x\rangle=\langle z,z\rangle -2\langle z,x\rangle +\langle x,x\rangle$, which is valid because of the real vector isomorphism, we can deduce that:
\begin{align}
\label{eq:28}
\begin{split}
&\left\Vert Z-X\right\Vert_F^2+\lambda\sum_i\sigma_i(X)\\
=&\sum_i\sigma_i^2(Z)-2\left\langle Z,X\right\rangle +\sum_i\sigma_i^2(X)+\lambda\sum_i\sigma_i(X)\\
\geq&\sum_i\sigma_i^2(Z)-2\sum_i\sigma_i(Z)\sigma_i(X)+\sum_i\sigma_i^2(X)+\lambda\sum_i\sigma_i(X)\\
=&\sum_i\left(\sigma_i(Z)-\sigma_i(X)\right)^2+\lambda\sum_i\sigma_i(X),
\end{split}
\end{align}
where Lemma \ref{lem:3} is invoked on the penultimate line. The last line can be seen as $\|\sigma(Z)-\sigma(X)\|_2^2+\lambda\|\sigma(X)\|_1$ which can be minimized by the soft-thresholding function (\ref{eq:30}).
\end{proof}

\subsection{The Extended PCP Formulation}
\label{subsec:form}

Finally, we define the complex and quaternionic PCP as:
\begin{equation}
\label{eq:29}
\min_{A,E}\|A\|_*+\lambda\|E\|_1\mbox{\quad s.t.\quad}X = A+E\,,
\end{equation}
where $X\in\mathbb{C}^{m\times n}$ for the complex PCP and  $X\in\mathbb{H}^{m\times n}$ for the quaternionic PCP. This can be solved by the same algorithms from \cite{Lin09}, except that the soft-thresholding function:
\begin{equation}
\label{eq:30}
\mathcal{S}_\lambda[x]=\left\{\begin{array}{ll}
x-\lambda,&\mbox{if }x>\lambda,\\
x+\lambda,&\mbox{if }x<-\lambda,\\
0,&\text{otherwise}
\end{array}\right.
\end{equation}
should be changed to $\mathrm{prox}_{\lambda\|\cdot\|_1}^\mathbb{H}\mathbf{z}$ and $\mathrm{prox}_{\lambda\|\cdot\|_1}^\mathbb{C}\mathbf{z}$ for the quaternionic and complex PCP, respectively. The inexact augmented Lagrange multiplier (IALM) adaptation is shown in Algorithm \ref{alg:1}.

\begin{algorithm}[t]
\caption{Complex and Quaternionic PCP via IALM}
\begin{algorithmic}[1]
\label{alg:1}
\REQUIRE $X\in F^{m\times n}$, $F\in\{\mathbb{H},\mathbb{C}\}$, $\lambda\in\mathbb{R}$, $\mu\in\mathbb{R}^\infty$
\ENSURE $A_k$, $E_k$
\STATE Let $E_1=0$, $Y_1=X/\max\left(\|X\|_2,\lambda^{-1}\|X\|_\infty\right)$, $k=1$
\WHILE{not converged}
    \STATE $A_{k+1}\gets\mathrm{prox}_{1/\mu_k\|\cdot\|_*}(X-E_k+\mu_k^{-1}Y_k)$
    \STATE $E_{k+1}\gets\mathrm{prox}_{\lambda/\mu_k\|\cdot\|_1}^F(X-A_{k+1}+\mu_k^{-1}Y_k)$
    \STATE $Y_{k+1}\gets Y_k + \mu_k (X-A_{k+1}-E_{k+1})$
    \STATE $k\gets k+1$
\ENDWHILE
\end{algorithmic}
\end{algorithm}

\section{Experiments}
\label{sec:exp}

We will use the SVS task to compare the real, complex and quaternionic versions of PCP. Specifically, we will evaluate the effects of the following three levels of phase-informedness on source separation performance:
\begin{itemize}
\item Real PCP (no phases);
\item Complex PCP (spectral phase only);
\item Quaternionic PCP (spectral and inter-channel phases).
\end{itemize}

\subsection{Experimental Setup}
\label{subsec:setup}

Our evaluation employs two source separation competition datasets, the iKala \cite{Chan15} (MIREX) and MSD100\footnote{\url{http://corpus-search.nii.ac.jp/sisec/2015/MUS/MSD100_2.zip}} (SiSEC) datasets. The iKala dataset contains 252 30-second mono clips, whereas the MSD100 dataset contains 100 full stereo songs with durations ranging from 2'22'' to 7'10''. To reduce computations, we use only 30-second fragments (1'45'' to 2'15'') from each MSD100 song. The choice of this time period is informed by the fact that this is the only period where all 100 songs contain vocals. Evaluation is done with BSS Eval Version 3 \cite{Vincent12}, which calculates the source-to-distortion ratio (SDR), source-to-interference ratio (SIR), and source-to-artifact ratio (SAR) \cite{Vincent12} for both the instrumental ($A$) and vocal ($E$) parts. For stereo signals, we additionally have the source-image-to-spatial-distortion ratio (ISR). From SDR we calculate the normalized SDR (NSDR) \cite{Hsu10} by $\text{SDR}(\hat{v},v)-\text{SDR}(x,v)$, where $\hat{v}$ is the separated voice part, $v$ is the original clean voice, and $x$ is the original mixture. The NSDR for the instrumental part is calculated in the same manner. The NSDR can be interpreted as the improvement in SDR using the mixture itself as the baseline. Finally, we aggregate the performance over all clips by taking the weighted average, with weight proportional to the length of each clip \cite{Hsu10}. The resulting measures are denoted as GNSDR, GSDR, GISR, GSIR, and GSAR, respectively. For these measures, a larger value means better. GNSDR and GSDR are the most important as they measure the overall distortion \cite{Vincent12}.

Both datasets are downsampled from 44\,100 Hz to 22\,050 Hz to reduce memory usage. The singing voice and instrumental accompaniment are mixed at 0 dB signal-to-noise ratio. Our main setup is identical to Fig.~\ref{fig:pcp-svs}. We use a short-time Fourier transform (STFT) with a 1\,411-point Hann window with 75\% overlap as in \cite{Chan15}. For real and complex PCP, the two-channel stereo mixtures are further downmixed into a single mono channel. In the real case, the magnitude part is fed into PCP and the separated parts are reconstructed via inverse STFT using the original phase \cite{Huang12}; in the complex case, the complex spectrogram is fed directly into complex PCP and reconstructed without phase substitution. Finally, in the quaternionic case, the stereo signal is represented using the quaternion format $L+R\jmath$, where $L$ and $R$ contain the complex spectrograms for the left and right channels, respectively. The value of $k$ (i.e., the trade-off between the trace norm and the $\ell_1$-norm) is empirically determined to be 1.5 for the iKala dataset and 3 for the MSD100 dataset.\footnote{Our implementation is available at \url{http://mac.citi.sinica.edu.tw/ikala/code.html} which contains all the code to reproduce the results.}

\subsection{Results and Analysis}
\label{subsec:results}

Results for the iKala and MSD100 datasets are shown in Tables~\ref{tab:1} and \ref{tab:2}, respectively. Twenty-eight one-tailed paired $t$-tests are performed to determine whether Complex $>$ Real and Quaternion $>$ Complex (all with $p < 0.05$ after Bonferroni correction, except the daggered ones which are insignificant). For GNSDR and GSDR, complex PCP clearly outperformed real PCP on both datasets. Furthermore, for the instrumental part of the MSD100 dataset, quaternionic PCP performed better than its complex counterparts on all five measures. This means that, with the exception of quaternionic voice, the more phase-informed the better the separation. We can see that stereo phase is useful for the quaternionic instrumental part, where GISR significantly outperforms its complex counterpart, suggesting a superior spatial (stereo) reconstruction. The lack of performance in the quaternionic voice case is probably a drawback of the PCP formulation \eqref{eq:29}, where the $\ell_1$-norm is intrinsically phase-removing and we can only rely on the trace norm for phase preservation. Further work is required to improve this. However, judging from a noise removal perspective, this paper is already useful for singing voice removal applications.

\begin{table}[!t]
\renewcommand{\arraystretch}{1.1}
\caption{Results for iKala instrumental ($A$) and vocal ($E$), in dB}
\label{tab:1}
\centering
\begin{tabular}{|l|l|r@{.}l|r@{.}l|r@{.}l|r@{.}l|}
\hline
\multicolumn{2}{|l|}{} & \multicolumn{2}{|l|}{GNSDR} & \multicolumn{2}{|l|}{GSDR} & \multicolumn{2}{|l|}{GSIR} & \multicolumn{2}{|l|}{GSAR}\\
\hline\hline
\multirow{2}{*}{Real} & A & \phantom{-0}3&98 & \phantom{0}0&11 & 1&33 & 9&65\\
& E & 2&41 & 6&36 & 11&17 & 9&46\\
\hline
\multirow{2}{*}{Complex} & A & 5&46 & 1&59 & 3&64 & 8&33$^\dagger$\\
& E & 3&45 & 7&40 & 10&40$^\dagger$ & 12&10\\
\hline
\end{tabular}
\end{table}

\begin{table}[!t]
\renewcommand{\arraystretch}{1.1}
\caption{Results for MSD100 instrumental ($A$) and vocal ($E$), in dB}
\label{tab:2}
\centering
\begin{tabular}{|l|l|r@{.}l|r@{.}l|r@{.}l|r@{.}l|r@{.}l|}
\hline
\multicolumn{2}{|l|}{} & \multicolumn{2}{|l|}{GNSDR} & \multicolumn{2}{|l|}{GSDR} & \multicolumn{2}{|l|}{GISR} & \multicolumn{2}{|l|}{GSIR} & \multicolumn{2}{|l|}{GSAR}\\
\hline\hline
\multirow{2}{*}{Real} & A & \phantom{-0}3&57 & 8&92 & 13&18 & 10&78 & 22&13\\
& E & 3&11 & --1&41 & 3&88 & 6&74 & 0&28\\
\hline
\multirow{2}{*}{Complex} & A & 3&70 & 9&05 & 14&32 & 10&65$^\dagger$ & 23&03\\
& E & 3&30 & --1&23 & 2&82$^\dagger$ & 8&66 & 0&63\\
\hline
\multirow{2}{*}{Quaternion} & A & 5&00 & 10&35 & 18&91 & 10&71 & 23&25\\
& E & 3&15$^\dagger$ & --1&38$^\dagger$ & 2&75$^\dagger$ & 8&32$^\dagger$ & 0&57$^\dagger$\\
\hline
\end{tabular}
\end{table}

\section{Conclusions}
\label{sec:conc}

We have extended the PCP formation of RPCA, first by introducing the notion of complex and quaternionic proximity operators, then by adapting the proximity operators of the $\ell_1$- and trace-norm regularizers to the complex and quaternionic cases. Apart from the complex $\ell_1$-norm case \cite{Maleki13}, all of the proposed proximity operators are new. Our extensions are phase-preserving and can be used in a wide range of signal processing applications including audio source separation. Evaluation on the iKala and MSD100 datasets showed that the preserved phase information would increase SVS performance. Other PCP-SVS variants, such as RPCAh \cite{Yang12}, RPCA-F0 \cite{Ikemiya15}, and VD-RPCA \cite{Lehner15} are all real-valued so our extended formulation here can potentially improve their performance. We also expect the quaternionic PCP to work for color face recognition \cite{Pei03,Candes11}, because it is based on a noise removal paradigm so the $E$ part is irrelevant.

\section*{Acknowledgment}

The authors would like to thank the anonymous reviewers for their valuable comments on the presentation of this letter.

\IEEEtriggeratref{15}
\bibliographystyle{IEEEtran}
\bibliography{chan16spl}

\begin{thebibliography}{10}
\providecommand{\url}[1]{#1}
\csname url@samestyle\endcsname
\providecommand{\newblock}{\relax}
\providecommand{\bibinfo}[2]{#2}
\providecommand{\BIBentrySTDinterwordspacing}{\spaceskip=0pt\relax}
\providecommand{\BIBentryALTinterwordstretchfactor}{4}
\providecommand{\BIBentryALTinterwordspacing}{\spaceskip=\fontdimen2\font plus
\BIBentryALTinterwordstretchfactor\fontdimen3\font minus
  \fontdimen4\font\relax}
\providecommand{\BIBforeignlanguage}[2]{{%
\expandafter\ifx\csname l@#1\endcsname\relax
\typeout{** WARNING: IEEEtran.bst: No hyphenation pattern has been}%
\typeout{** loaded for the language `#1'. Using the pattern for}%
\typeout{** the default language instead.}%
\else
\language=\csname l@#1\endcsname
\fi
#2}}
\providecommand{\BIBdecl}{\relax}
\BIBdecl

\bibitem{Candes11}
E.~J. Cand{\`e}s, X.~Li, Y.~Ma, and J.~Wright, ``{Robust principal component
  analysis?}'' \emph{{J. ACM}}, vol.~58, no.~3, pp. 1--37, 2011.

\bibitem{Huang12}
P.-S. Huang, S.~D. Chen, P.~Smaragdis, and M.~Hasegawa-Johnson,
  ``{Singing-voice separation from monaural recordings using robust principal
  component analysis},'' in \emph{{Proc. IEEE Int. Conf. Acoust., Speech and
  Signal Process.}}, 2012, pp. 57--60.

\bibitem{Ikemiya15}
Y.~Ikemiya, K.~Yoshii, and K.~Itoyama, ``{Singing voice analysis and editing
  based on mutually dependent f0 estimation and source separation},'' in
  \emph{{Proc. IEEE Int. Conf. Acoust., Speech and Signal Process.}}, 2015, pp.
  574--578.

\bibitem{Peng10}
Y.~Peng, A.~Ganesh, J.~Wright, W.~Xu, and Y.~Ma, ``{RASL: Robust alignment by
  sparse and low-rank decomposition for linearly correlated images},'' in
  \emph{{Proc. IEEE Comput. Soc. Conf. Comput. Vision and Pattern
  Recognition}}, 2010, pp. 763--770.

\bibitem{Bouwmans14}
T.~Bouwmans and E.~H. Zahzah, ``{Robust PCA via principal component pursuit: A
  review for a comparative evaluation in video surveillance},'' \emph{{Computer
  Vision and Image Understanding}}, vol. 122, pp. 22--34, 2014.

\bibitem{Combettes11}
P.~L. Combettes and J.-C. Pesquet, ``{Proximal splitting methods in signal
  processing},'' in \emph{{Fixed-Point Algorithms for Inverse Problems in
  Science and Engineering}}, H.~H. Bauschke, R.~S. Burachik, P.~L. Combettes,
  V.~Elser, D.~R. Luke, and H.~Wolkowicz, Eds.\hskip 1em plus 0.5em minus
  0.4em\relax New York: Springer, 2011, vol.~49, pp. 185--212.

\bibitem{Donoho95}
D.~L. Donoho, ``{De-noising by soft-thresholding},'' \emph{{IEEE Trans. Inf.
  Theory}}, vol.~41, no.~3, pp. 613--627, 1995.

\bibitem{Cai10}
J.-F. Cai, E.~J. Cand{\`e}s, and Z.~Shen, ``{A singular value thresholding
  algorithm for matrix completion},'' \emph{{SIAM J. Optimization}}, vol.~20,
  no.~4, pp. 1956--1982, 2010.

\bibitem{Lin09}
Z.~Lin, M.~Chen, L.~Wu, and Y.~Ma, ``{The augmented Lagrange multiplier method
  for exact recovery of corrupted low-rank matrices},'' Tech. Rep.
  UILU-ENG-09-2215, 2009.

\bibitem{Yang12}
Y.-H. Yang, ``{On sparse and low-rank matrix decomposition for singing voice
  separation},'' in \emph{{Proc. ACM Int. Conf. Multimedia}}, 2012, pp.
  757--760.

\bibitem{Chan15}
T.-S. Chan, T.-C. Yeh, Z.-C. Fan, H.-W. Chen, L.~Su, Y.-H. Yang, and R.~Jang,
  ``{Vocal activity informed singing voice separation with the iKala
  dataset},'' in \emph{{Proc. IEEE Int. Conf. Acoust., Speech and Signal
  Process.}}, 2015, pp. 718--722.

\bibitem{King10}
B.~King and L.~Atlas, ``{Single-channel source separation using
  simplified-training complex matrix factorization},'' in \emph{{Proc. IEEE
  Int. Conf. Acoust., Speech and Signal Process.}}, 2010, pp. 4206--4209.

\bibitem{Traa14}
J.~Traa, M.~Kim, and P.~Smaragdis, ``{Phase and level difference fusion for
  robust multichannel source separation},'' in \emph{{Proc. IEEE Int. Conf.
  Acoust., Speech and Signal Process.}}, 2014, pp. 6687--6691.

\bibitem{Pei03}
S.-C. Pei, J.-H. Chang, and J.-J. Ding, ``{Quaternion matrix singular value
  decomposition and its applications for color image processing},'' in
  \emph{{Proc. IEEE Int. Conf. Image Process.}}, 2003, pp. 805--808.

\bibitem{LeBihan03}
N.~{Le~Bihan} and S.~J. Sangwine, ``{Quaternion principal component analysis of
  color images},'' in \emph{{Proc. IEEE Int. Conf. Image Process.}}, 2003, pp.
  809--812.

\bibitem{10.1016/j.amc.2006.04.032}
S.~J. Sangwine and N.~{Le Bihan}, ``Quaternion singular value decomposition
  based on bidiagonalization to a real or complex matrix using quaternion
  householder transformations,'' \emph{Applied Mathematics and Computation},
  vol. 182, no.~1, pp. 727--738, 1 November 2006.

\bibitem{Zhang97}
F.~Zhang, ``{Quaternions and matrices of quaternions},'' \emph{{Linear Algebra
  and its Applicat.}}, vol. 251, pp. 21--57, 1997.

\bibitem{LeBihan04}
N.~{Le~Bihan} and J.~Mars, ``{Singular value decomposition of quaternion
  matrices: A new tool for vector-sensor signal processing},'' \emph{{Signal
  Process.}}, vol.~84, pp. 1177--1199, 2004.

\bibitem{Knapp02}
A.~W. Knapp, \emph{{Lie Groups Beyond an Introduction}}, 2nd~ed.\hskip 1em plus
  0.5em minus 0.4em\relax Boston: Birkh{\"a}user, 2002.

\bibitem{Tai68}
S.-S. Tai, ``{Minimum imbeddings of compact symmetric spaces of rank one},''
  \emph{{J. Differential Geometry}}, vol.~2, pp. 55--66, 1968.

\bibitem{Kumar12}
A.~Kumar and T.-S. Chan, ``{Iris recognition using quaternionic sparse
  orientation code (QSOC)},'' in \emph{{Proc. IEEE Comput. Soc. Conf. Comput.
  Vision and Pattern Recognition Workshops}}, 2012, pp. 59--64.

\bibitem{Yuan06}
M.~Yuan and Y.~Lin, ``{Model selection and estimation in regression with
  grouped variables},'' \emph{{J. Roy. Stat. Soc. B}}, vol.~68, no.~1, pp.
  49--67, 2006.

\bibitem{Sardy00}
S.~Sardy, ``{Minimax threshold for denoising complex signals with
  Waveshrink},'' \emph{{IEEE Trans. Signal Process.}}, vol.~48, no.~4, pp.
  1023--1028, 2000.

\bibitem{Jin12}
J.~Jin, Y.~Liu, Q.~Wang, and S.~Yi, ``{Ultrasonic speckle reduction based on
  soft thresholding in quaternion wavelet domain},'' in \emph{{Proc. IEEE Int.
  Instrum. and Meas. Technol. Conf.}}, 2012.

\bibitem{Maleki13}
A.~Maleki, L.~Anitori, Z.~Yang, and R.~G. Baraniuk, ``{Asymptotic analysis of
  complex LASSO via complex approximate message passing (CAMP)},'' \emph{{IEEE
  Trans. Inf. Theory}}, vol.~59, no.~7, pp. 4290--4308, 2013.

\bibitem{Horn13}
R.~A. Horn and C.~R. Johnson, \emph{{Matrix Analysis}}, 2nd~ed.\hskip 1em plus
  0.5em minus 0.4em\relax Cambridge: Cambridge University Press, 2013.

\bibitem{Tomioka12}
R.~Tomioka, T.~Suzuki, and M.~Sugiyama, ``{Augmented Lagrangian methods for
  learning, selecting, and combining features},'' in \emph{{Optimization for
  Machine Learning}}, S.~Sra, S.~Nowozin, and S.~J. Wright, Eds.\hskip 1em plus
  0.5em minus 0.4em\relax Cambridge, MA: MIT Press, 2012, pp. 255--285.

\bibitem{Vincent12}
E.~Vincent, S.~Araki, F.~Theis, G.~Nolte, P.~Bofill, H.~Sawada, A.~Ozerov,
  V.~Gowreesunker, D.~Lutter, and N.~Q.~K. Duong, ``{The signal separation
  evaluation campaign (2007--2010): Achievements and remaining challenges},''
  \emph{{Signal Process.}}, vol.~92, pp. 1928--1936, 2012.

\bibitem{Hsu10}
C.-L. Hsu and J.-S. Jang, ``{On the improvement of singing voice separation for
  monaural recordings using the MIR-1K dataset},'' \emph{{IEEE Trans. Audio,
  Speech, Language Process.}}, vol.~18, no.~2, pp. 310--319, 2010.

\bibitem{Lehner15}
B.~Lehner and G.~Widmer, ``{Monaural blind source separation in the context of
  vocal detection},'' in \emph{{Proc. Int. Soc. Music Info. Retrieval Conf.}},
  2015, pp. 309--315.

\end{thebibliography}

\end{document}